\newtheorem{definition}{\noindent{\it Definition}}[section]
\newtheorem{theorem}{\noindent{\it Theorem}}[section]
\newtheorem{remark}[theorem]{\noindent{\it Remark}}
\newtheorem{corollary}[theorem]{\noindent{\it Corollary}}
\newenvironment{proof}{\noindent{\it Proof:}}{$\hfill$ $\Box$\\ }
\newtheorem{example}{\noindent{\it Example}}[section]
\begin{document}

\title{Convolutional Codes Derived From Group Character Codes}

 \author{Giuliano G. La Guardia
\thanks{Giuliano Gadioli La Guardia is with Department of Mathematics and Statistics,
State University of Ponta Grossa (UEPG), 84030-900, Ponta Grossa,
PR, Brazil. }}

\maketitle

\begin{abstract}
New families of unit memory as well as multi-memory convolutional
codes are constructed algebraically in this paper. These
convolutional codes are derived from the class of group character
codes. The proposed codes have basic generator matrices,
consequently, they are non catastrophic. Additionally, the new
code parameters are better than the ones available in the
literature.
\end{abstract}

\section{Introduction}\label{Intro}

Constructions of (classical) convolutional codes and their
corresponding properties have been presented in the literature
\cite{Forney:1970,Piret:1983,Piret:1988,Piret:1988art,Rosenthal:1999,York:1999,Hole:2000,Rosenthal:2001,Gluesing:2006,Luerssen:2006,Schmale:2006,Aly:2007,Luerssen:2008,LaGuardia:2012}.
In \cite{Forney:1970}, the author constructed an algebraic
structure for convolutional codes. Addressing the construction of
maximum-distance-separable (MDS) convolutional codes (in the sense
that the codes attain the generalized Singleton bound introduced
in \cite[Theorem 2.2]{Rosenthal:1999}), there exist interesting
papers in the literature
\cite{Rosenthal:1999,Rosenthal:2001,Schmale:2006}. Concerning the
optimality with respect to other bounds we have
\cite{Piret:1983,Piret:1988art}, and in \cite{Gluesing:2006},
Strongly MDS convolutional codes were constructed. In
\cite{York:1999,Hole:2000,Aly:2007,LaGuardia:2012}, the authors
presented constructions of convolutional BCH codes. In
\cite{Luerssen:2006}, doubly-cyclic convolutional codes were
constructed and in \cite{Luerssen:2008}, the authors described
cyclic convolutional codes by means of the matrix ring.

In this paper we construct families of unit memory as well as
multi-memory convolutional codes, although it is well known that
unit memory codes have large free distance when compared to
multi-memory codes of same rate ( see \cite{Lee:1976}). Our
constructions are performed algebraically and not by computation
search. Consequently, we do not restrict ourselves in constructing
only few specific codes. To do so we apply the famous method
proposed by Piret \cite{Piret:1988} and recently generalized by
Aly \emph{et al.} \cite[Theorem 3]{Aly:2007}, which consists in
the construction of convolutional codes derived from block codes.
The block codes utilized in our construction is the subclass of
$2$-group character codes introduced by Ding \emph{et al.}
\cite{Ding:2000} as well as its generalization to $n$-group,
($n\geq 2$) character codes \cite{Ling:2004}.

The new families of convolutional codes consist of codes whose
parameters are given by\\
\begin{itemize}

\item $(2^{m}, 2^{m}- s_{m}(u), s_{m}(u)- s_{m}(r) ; 1, d_{f}\geq
2^{r+1} {)}_{q}$,
\end{itemize}

\vspace{0.25cm}

\begin{itemize}
\item $(2^{m}, s_{m}(u), s_{m}(u)- s_{m}(r) ; \mu,
d_{f}^{\perp}\geq 2^{m-u}+1 {)}_{q}$,
\end{itemize}

\vspace{0.25cm}

where $q$ is a power of an odd prime, $m\geq 3$ is an integer, $r$
and $u$ are positive integers satisfying $r < u < m$ and
$\displaystyle\sum_{i=u+1}^{m}\left(
\begin{array}{c}
m\\
i\\
\end{array}
\right) > \displaystyle\sum_{i=r+1}^{u}\left(
\begin{array}{c}
m\\
i\\
\end{array}
\right)$, $\mu \geq 1$ is an integer and $s_{m}(v)=
\displaystyle\sum_{i=0}^{v}\left(
\begin{array}{c}
m\\
i\\
\end{array}
\right)$;

\vspace{0.25cm}

\begin{itemize}
\item $(2^{m}, 2^{m}- s_{m}(u), \delta ; 2, d_{f}\geq 2^{r+1}
{)}_{q}$,
\end{itemize}

\vspace{0.25cm}

where $q$ is a power of an odd prime, $m\geq 4$ is
an integer, $s_{m}(u)$ is given above, $\delta =
\displaystyle\sum_{i=r+1}^{v}\left(
\begin{array}{c}
m\\
i\\
\end{array}
\right)$, and $r, u, v$ are positive integers such that the
inequalities $r < v < u < m$, $\displaystyle\sum_{i=u+1}^{m}\left(
\begin{array}{c}
m\\
i\\
\end{array}
\right) \geq \displaystyle\sum_{i=r+1}^{v}\left(
\begin{array}{c}
m\\
i\\
\end{array}
\right) \geq \displaystyle\sum_{i=v+1}^{u}\left(
\begin{array}{c}
m\\
i\\
\end{array}
\right)$ hold;

\vspace{0.35cm}

\begin{itemize}
\item $(l^{m}, l^{m}-S_{m}(u), S_{m}(u)- S_{m}(r); 1, d_{f}\geq
(b+2)l^{a} {)}_{q}$,
\end{itemize}

\vspace{0.35cm}
where $m\geq 3$, $l\geq 3$ are integers, $q$ is a
prime power such that $l | (q - 1)$, $r$ and $u$ are positive
integers satisfying the inequalities $r < u < m(l-1)$ and
$\displaystyle\sum_{i=u+1}^{m}\left(
\begin{array}{c}
m\\
i\\
\end{array}
\right)_{l} \geq \displaystyle\sum_{i=r+1}^{u}\left(
\begin{array}{c}
m\\
i\\
\end{array}
\right)_{l}$, $a$ and $b$ are integers such that $r = a(l- 1) +
b$, $0 \leq b \leq l-2$ and $S_{m}(v)= \displaystyle\sum_{i=0}^{v}
\displaystyle\sum_{k=0}^{m}{(-1)}^{k}\left(
\begin{array}{c}
m\\
k\\
\end{array}
\right)\left(
\begin{array}{c}
m-1+i-kl\\
m-1\\
\end{array}
\right)$.

\vspace{0.25cm}

The paper is organized as follows. In Section~\ref{II}, we recall
basic concepts and results concerning the class of character
codes. Section~\ref{III} deals with basic definitions and known
results on convolutional codes. In Section~\ref{IV}, the
contributions of the work are presented, that is, the
constructions of new families of convolutional codes derived from
character codes. In Section~\ref{V}, we compare the new code
parameters with the ones available in the literature, and finally,
in Section~\ref{VI}, a summary of the paper is described.

\section{Character Codes}\label{II}

Throughout this paper, $p$ denotes a prime number, $q$ is a prime
power and ${\mathbb F}_{q}$ is a finite field with $q$ elements.
As usual, the parameters of a linear code are given by ${[n, k,
d]}_{q}$, and the notation wt$_{H} (x)$ means the Hamming weight
of a vector ${\bf x} \in {\mathbb F}_{q}^{n}$.

The class of group character codes were introduced by Ding
\emph{et al.} \cite{Ding:2000}. These codes are defined by using
characters of groups; they are linear, defined over ${\mathbb
F}_{q}$ and are similar (with respect to the parameters) to binary
Reed-Muller codes.

In order to define such class of codes, let us consider an abelian
group $(G, +)$ of order $n$ and exponent $N$ and let ${\mathbb
F}_{q}$ be a finite field such that $\gcd (n, q)=1$ and $N|(q-1)$.
Assume also that $({\mathbb F}_{q}^{*}, \cdot )$ is the
multiplicative group of nonzero elements of ${\mathbb F}_{q}$.
Then a character ${\gamma}$ from $(G, +)$ to $({\mathbb
F}_{q}^{*}, \cdot )$ is a homomorphism of groups (in which, from
our assumptions, the operation among characters is the
multiplication). We denote the group of characters by $(\Gamma,
\cdot)$. Since in this case $(G, +)$ is isomorphic to $(\Gamma,
\cdot)$, then there exists a bijection $g\in G\longrightarrow
{\gamma}_{g}\in \Gamma$ and we can denote $\Gamma=\{ {\gamma}_{0},
\ldots , {\gamma}_{n-1}\}$ (${\gamma}_{0}$ is the trivial
character). For every $X\subset G$, the \emph{group character
code} $C_{X}$ is a linear code over ${\mathbb F}_{q}$ defined by
$C_{X}=\left\{(c_0, \ldots, c_{n-1})\in {\mathbb F}_{q}^{n}|
\displaystyle\sum_{i=0}^{n-1} c_{i}{\gamma}_{i}(x)=0, \forall x
\in X\right\}$ and has parameters ${[n, k]}_{q}$, where $n=|G|$
and $k=n-|X|$ (see \cite{Ding:2000}). If $X=\{ x_0 , \ldots ,
x_{s-1} \}$ $\subset G$ then a generator matrix for $C_{X}$ is
given by $G_{X}=[{\gamma}_{j-1}(-x_{s-1+i})]_{1\leq i\leq n-s,
1\leq j\leq n};$ a parity check matrix for $C_{X}$ is given by
$H_{X}=[{\gamma}_{j-1}(x_{i-1})]_{1\leq i\leq s, 1\leq j\leq n}$.

A particular case is when it is considered the commutative group
$({\mathbb{Z}}_{2}^{m}, +)$, with $m\geq 1$, and a finite field
${\mathbb F}_{q}$ of odd characteristic. The characters of
${\mathbb{Z}}_{2}^{m}$ are given by ${\gamma}_{{\bf x}}({\bf
y})={(-1)}^{{\bf x}\cdot {\bf y}}$, where ${\bf x}, {\bf y} \in
{\mathbb{Z}}_{2}^{m}$. Then one defines the character code
$C_{q}(r, m)=C_{X_{r}}$, where $X_{r}= \{ {\bf x} \in
{\mathbb{Z}}_{2}^{m} | wt_{H} ({\bf x}) > r \}$, with parameters
$[2^{m},$ $s_{m}(r), 2^{m-r}{]}_{q}$ (see \cite[Theorem
6]{Ding:2000}), where $s_{m}(r)= \displaystyle\sum_{i=0}^{r}\left(
\begin{array}{c}
m\\
i\\
\end{array}
\right)$. Its (Euclidean) dual code ${[C_{q}(r, m)]}^{\perp}$ is
equivalent to $C_{q}(m-r-1, m)$ (see \cite[Theorem 8]{Ding:2000}).

In 2004, Ling \cite{Ling:2004} generalized such class of codes by
considering the group $({\mathbb{Z}}_{l}^{m}, +)$, where $m\geq 1$
and $l\geq 2$ are integers and $F_{q}$ is a finite field that
contains a $l$th root of unity, that is, $l | (q - 1)$. Let ${\bf
x} = (x_1, \ldots, x_{m}) \in {\mathbb{Z}}_{l}^{m}$ and assume
that $||x||= x_1 + \ldots + x_{m}$, where the sum is considered as
a rational integer. Analogously to $X_{r}$, one can define the set
$X(r, m; l) = \{{\bf x} \in {\mathbb{Z}}_{l}^{m} : ||{\bf x}|| > r
\}$, generating the linear $q$-ary group character code $C_{q} (r,
m; l) =\left\{(c_0, \ldots, c_{l^{m}-1})\in {\mathbb
F}_{q}^{l^{m}}| \displaystyle\sum_{i=0}^{l^{m}-1}
c_{i}{\gamma}_{i}({\bf x})=0, \forall {\bf x} \in X(r, m;
l)\right\},$ where ${\gamma}_{0}, \ldots, {\gamma}_{l^{m}-1}$, are
all the characters from ${\mathbb{Z}}_{l}^{m}$ to ${\mathbb
F}_{q}^{*}$. To be more precise, if $\xi$ is a fixed $l$th root of
unity, then the characters ${\gamma}_{i}: {\mathbb{Z}}_{l}^{m}
\longrightarrow {\mathbb F}_{q}^{*}$, $i=0, \ldots, l^{m}-1$, are
given by ${\gamma}_{i}((x_1 , \ldots , x_{m})) = {\xi}^{x_1 i_1 +
\ldots + x_{m}i_{m}}$, where the coefficients $i_{k}$, $k=1,
\ldots , m$, are the coefficients of the (unique) $l$-adic
expansion of $i$. The code $C_{q} (r, m; l)$ has parameters ${[
l^{m}, S_{m}(r), (l - b)l^{m-1-a}]}_{q}$, where $0\leq r < m(l-1)$
is writing as $r = a(l- 1) + b$, $0 \leq b \leq l-2$, and
$S_{m}(r)= \displaystyle\sum_{i=0}^{r}
\displaystyle\sum_{k=0}^{m}{(-1)}^{k}\left(
\begin{array}{c}
m\\
k\\
\end{array}
\right)\left(
\begin{array}{c}
m-1+i-kl\\
m-1\\
\end{array}
\right).$ Furthermore, its (Euclidean) dual code ${[C_{q} (r, m;
l)]}^{\perp}$ is monomial equivalent to $C_{q} (m(l-1)-1-r, m; l)$
(see \cite{Ling:2004}).

\section{Convolutional Codes}\label{III}

In this section we present a brief review of convolutional codes.
For more details we refer the reader to
\cite{Piret:1988,Johannesson:1999,Huffman:2003}.

Recall that a polynomial encoder matrix $G(D) \in {\mathbb
F}_{q}{[D]}^{k \times n}$ is called \emph{basic} if $G(D)$ has a
polynomial right inverse. A basic generator matrix of a
convolutional code $C$ is called \emph{reduced} (or minimal
\cite{Rosenthal:2001,Huffman:2003,Luerssen:2008}) if the overall
constraint length $\delta =\displaystyle\sum_{i=1}^{k} {\delta}_i$
has the smallest value among all basic generator matrices of $C$;
in this case the overall constraint length $\delta$ is called the
\emph{degree} of the code. The \emph{weight} of an element ${\bf
v}(D)\in {\mathbb F}_{q} {[D]}^{n}$ is defined as wt$({\bf
v}(D))=\displaystyle\sum_{i=1}^{n}$wt$(v_i(D))$, where
wt$(v_i(D))$ is the number of nonzero coefficients of $v_{i}(D)$.

\begin{definition}\cite{Johannesson:1999}
A rate $k/n$ \emph{convolutional code} $C$ with parameters $(n, k,
\delta ; \mu,$ $d_{f} {)}_{q}$ is a submodule of ${\mathbb F}_q
{[D]}^{n}$ generated by a reduced basic matrix $G(D)=(g_{ij}) \in
{\mathbb F}_q {[D]}^{k \times n}$, that is, $C = \{ {\bf u}(D)G(D)
| {\bf u}(D)\in {\mathbb F}_{q} {[D]}^{k} \}$, where $n$ is the
length, $k$ is the dimension, $\delta =\displaystyle\sum_{i=1}^{k}
{\delta}_i$ is the \emph{degree}, where ${\delta}_i =
{\max}_{1\leq j \leq n} \{ \deg g_{ij} \}$, $\mu = {\max}_{1\leq
i\leq k}\{{\delta}_i\}$ is the \emph{memory} and
$d_{f}=$wt$(C)=\min \{wt({\bf v}(D)) \mid {\bf v}(D) \in C, {\bf
v}(D)\neq 0 \}$ is the \emph{free distance} of the code.
\end{definition}

If ${\mathbb F}_{q}((D))$ is the field of Laurent series we define
the weight of ${\bf u}(D)$ as wt$({\bf u}(D)) =
{\sum}_{\mathbb{Z}}$wt$(u_i)$. A generator matrix $G(D)$ is called
\emph{catastrophic} if there exists a ${\bf u}{(D)}^{k}\in
{\mathbb F}_{q}{((D))}^{k}$ of infinite Hamming weight such that
${\bf u}{(D)}^{k}G(D)$ has finite Hamming weight. The
convolutional codes constructed in this paper have basic generator
matrices; consequently, they are non catastrophic.

We define the Euclidean inner product of two $n$-tuples ${\bf
u}(D) = {\sum}_i {\bf u}_i D^i$ and ${\bf v}(D) = {\sum}_j {\bf
u}_j D^j$ in ${\mathbb F}_q {[D]}^{n}$ as $\langle {\bf u}(D)\mid
{\bf v}(D)\rangle = {\sum}_i {\bf u}_i \cdot {\bf v}_i$. If $C$ is
a convolutional code then its (Euclidean) dual is given by
$C^{\perp }=\{ {\bf u}(D) \in {\mathbb F}_q {[D]}^{n}\mid \langle
{\bf u}(D)\mid {\bf v}(D)\rangle = 0$ for all ${\bf v}(D)\in C\}$.

\subsection{Convolutional Codes Derived From Block
Codes}\label{IIIA}

Let ${[n, k, d]}_{q}$ be a block code whose parity check matrix
$H$ is partitioned into $\mu +1$ disjoint submatrices $H_i$ such
that $H = \left[
\begin{array}{cccc}
H_0 & H_1 & \cdots & H_{\mu}\\
\end{array}\right]^{T},$ where each $H_i$ has $n$ columns, obtaining the
polynomial matrix
\begin{eqnarray}
G(D) =  {\tilde H}_0 + {\tilde H}_1 D + {\tilde H}_2 D^2 + \ldots
+ {\tilde H}_{\mu} D^{\mu}.
\end{eqnarray}
The matrix $G(D)$ generates a convolutional code $V$ with $\kappa$
rows, where $\kappa$ is the maximal number of rows among the
matrices $H_i$; the matrices ${\tilde H}_i$, where $0\leq i\leq
\mu$, are derived from the respective $H_i$ by adding zero-rows at
the bottom in such a way that the matrix ${\tilde H}_i$ has
$\kappa$ rows in total.

\begin{theorem}\cite[Theorem 3]{Aly:2007}\label{A}
Suppose that $C \subseteq {\mathbb F}_q^n$ is an ${[n, k, d]}_{q}$
code with parity check matrix $H \in {\mathbb F}_q^{(n-k)\times
n}$ partitioned into submatrices $H_0, H_1, \ldots, H_{\mu}$ as
above such that $\kappa =$ rk$H_0$ and rk$H_i \leq \kappa$ for $1
\leq i\leq \mu$. Then the matrix $G(D)$ is a reduced basic
generator matrix. Moreover, if $d_f$ and $d_f^{\perp}$ denote the
free distances of $V$ and $V^{\perp}$, respectively, $d_i$ denote
the minimum distance of the code $C_i = \{ {\bf v}\in {\mathbb
F}_q^n \mid {\bf v} {\tilde H}_i^t =0 \}$ and $d^{\perp}$ is the
minimum distance of $C^{\perp}$, then one has $\min \{ d_0 +
d_{\mu} , d \} \leq d_f^{\perp} \leq  d$ and $d_f \geq d^{\perp}$.
\end{theorem}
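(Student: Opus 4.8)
The plan is to prove the statement in two stages: the structural claim that $G(D)$ is a reduced basic generator matrix, and then the three distance estimates. The single fact that drives everything is that $H$, being a parity-check matrix of $C$, has $n-k$ linearly independent rows; consequently the rows occurring in the distinct bands $H_0,\ldots,H_{\mu}$ are jointly linearly independent over $\mathbb F_q$, and the spans of the row sets of different bands meet only in $0$.

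First I would show $G(D)$ is basic and reduced. Because $\mathrm{rk}\,H_0=\kappa$ equals the maximal number of rows among the $H_i$, the matrix $H_0$ has exactly $\kappa$ independent rows, so $\widetilde H_0=H_0$ and $G(0)=\widetilde H_0$ has rank $\kappa$. For basicness it suffices, by the standard criterion, to show $G(\lambda)$ has rank $\kappa$ for every $\lambda\in\overline{\mathbb F}_q$, since then the $\kappa\times\kappa$ minors of $G(D)$ have no common zero and a polynomial right inverse exists. A nontrivial left-kernel vector $a$ of $G(\lambda)=\sum_j\lambda^j\widetilde H_j$ would give $\sum_j\lambda^j r_j=0$, where $r_j$ is the $a$-combination of the rows of $\widetilde H_j$; since $\widetilde H_0$ has full row rank, $r_0\neq0$, yet $r_0\in\mathrm{span}(H_0)$ while $\sum_{j\ge1}\lambda^j r_j$ lies in the span of the remaining rows of $H$, and these spaces intersect trivially, forcing $r_0=0$, a contradiction. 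For reducedness I would use the predictable-degree criterion: the leading coefficient of the $i$-th row of $G(D)$ is the $i$-th row of some $\widetilde H_{\delta_i}$, that is, an actual row of $H$, and as $i$ runs over $1,\ldots,\kappa$ these are pairwise distinct, hence linearly independent, rows of $H$; so the high-order coefficient matrix $[G]_h$ has full rank $\kappa$ and $G(D)$ is reduced.

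For $d_f\ge d^{\perp}$ I would take a nonzero $v(D)=u(D)G(D)\in V$, let $a$ be the least index with $u^{(a)}\neq0$, and note that the coefficient of $D^a$ in $v(D)$ is $u^{(a)}\widetilde H_0=u^{(a)}H_0$; this is nonzero (as $H_0$ has full row rank) and lies in the row space of $H$, that is in $C^{\perp}$, so $\mathrm{wt}(v(D))\ge \mathrm{wt}(u^{(a)}H_0)\ge d^{\perp}$. For $d_f^{\perp}\le d$ I would embed $C$ into $V^{\perp}$ as constant $n$-tuples: if $c\in C$ then $\widetilde H_j c^{t}=0$ for every $j$, hence $c\,G(D)^{t}=\sum_j(\widetilde H_j c^{t})^{t}D^{j}=0$, so $c\in V^{\perp}$ and $d_f^{\perp}\le\min_{0\neq c\in C}\mathrm{wt}(c)=d$.

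For $d_f^{\perp}\ge\min\{d_0+d_{\mu},d\}$ I would take a nonzero $w(D)\in V^{\perp}$ of minimal weight; since $V^{\perp}$ is closed under division by $D$, I may assume $w^{(0)}\neq0$, and I put $\ell=\deg w(D)$, so $w^{(\ell)}\neq0$. Equating the degree-$0$ coefficient of $w(D)G(D)^{t}$ to zero yields $\widetilde H_0(w^{(0)})^{t}=0$, so $w^{(0)}\in C_0$ and $\mathrm{wt}(w^{(0)})\ge d_0$; equating the degree-$(\ell+\mu)$ coefficient to zero isolates the single term $w^{(\ell)}\widetilde H_{\mu}$, giving $w^{(\ell)}\in C_{\mu}$ and $\mathrm{wt}(w^{(\ell)})\ge d_{\mu}$. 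If $\ell\ge1$ then $w^{(0)}$ and $w^{(\ell)}$ are different coefficients, so $\mathrm{wt}(w(D))\ge d_0+d_{\mu}$; if $\ell=0$ then $w(D)=w^{(0)}$ is constant, vanishing of $w(D)G(D)^{t}$ forces $\widetilde H_j(w^{(0)})^{t}=0$ for all $j$, hence $w^{(0)}\in C$ and $\mathrm{wt}(w(D))\ge d$. Either way $\mathrm{wt}(w(D))\ge\min\{d_0+d_{\mu},d\}$. I expect the structural step to be the main obstacle — turning the rank hypotheses on the $H_i$ into both the predictable-degree property and pointwise full rank of $G(\lambda)$ — together with the bookkeeping in the last step, namely verifying that the extremal coefficients of $w(D)G(D)^{t}$ genuinely isolate a single product and correctly handling the degenerate case $\ell=0$.
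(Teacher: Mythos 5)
The paper does not actually prove this statement: it is quoted verbatim from Aly \emph{et al.} \cite[Theorem 3]{Aly:2007} and used as a black box, so there is no in-paper argument to compare yours against. Judged on its own merits, your proof is essentially the standard one and its main steps are sound: since the parity check matrix $H$ has full row rank $n-k$, the rows of the distinct bands $H_0,\ldots,H_\mu$ are jointly independent, which simultaneously gives pointwise full rank of $G(\lambda)$ (hence basicness via the minor criterion) and full rank of the high-order coefficient matrix (hence reducedness); the bound $d_f\ge d^{\perp}$ via the lowest nonzero coefficient $u^{(a)}H_0$ lying in the row space of $H$, the embedding of $C$ as constant codewords of $V^{\perp}$, and the case split on $\deg w(D)$ for the lower bound on $d_f^{\perp}$ are all correct. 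The one point to tighten is your characterization of $V^{\perp}$: with the coefficientwise inner product $\langle {\bf u}(D)\mid {\bf v}(D)\rangle=\sum_i {\bf u}_i\cdot {\bf v}_i$ adopted in Section~\ref{III}, membership of $w(D)$ in $V^{\perp}$ amounts to the vanishing of the coefficients of $w(D)G(D^{-1})^{t}$, i.e.\ $\sum_j w^{(a+j)}\tilde H_j^{t}=0$ for all shifts $a$, and not to $w(D)G(D)^{t}=0$; with that correction the extremal coefficients satisfy $w^{(\ell)}\tilde H_0^{t}=0$ and $w^{(0)}\tilde H_{\mu}^{t}=0$, so the roles of $C_0$ and $C_{\mu}$ are interchanged relative to what you wrote. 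Because the bound $d_0+d_{\mu}$ is symmetric in the two indices, and because the module dual you implicitly use and the sequence-space dual defined in the paper are equivalent up to a weight-preserving substitution $D\mapsto D^{-1}$ and a shift, your conclusions are unaffected; but as written the identity $w(D)G(D)^{t}=0$ does not follow from the paper's definition of the dual, and the degenerate case $\ell=0$ genuinely needs the conditions at all shifts (not just the constant coefficient) to conclude $w^{(0)}\in C$, so this convention point is worth making explicit.
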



\section{The New Codes}\label{IV}

Constructions of convolutional codes have been appeared in the
literature
\cite{Piret:1988,Rosenthal:1999,Hole:2000,Rosenthal:2001,Gluesing:2006,Luerssen:2006,Schmale:2006,Luerssen:2008}.
It is not simple to derive families of such codes by means of
algebraic approaches. In other words, most of the convolutional
codes available in the literature are constructed case by case.

Motivated by the construction of new convolutional codes by means
of algebraic method, we propose the construction of new
convolutional codes derived from character codes. We reinforce
that the convolutional codes constructed in this paper have basic
generator matrices, so they are non catastrophic. Our first main
result is given in the following:

\begin{theorem}\label{main1}
Let ${\mathbb F}_{q}$ be a finite field of odd characteristic and
consider the commutative group $G=({\mathbb{Z}}_{2}^{m}, +)$ where
$m\geq 3$ is an integer. Assume that $r$ and $u$ are positive
integers such that the inequalities $ r < u < m$ and
$\displaystyle\sum_{i=u+1}^{m}\left(
\begin{array}{c}
m\\
i\\
\end{array}
\right) \geq \displaystyle\sum_{i=r+1}^{u}\left(
\begin{array}{c}
m\\
i\\
\end{array}
\right)$ hold. Then there exist unit memory convolutional codes
with parameters $(2^{m}, 2^{m}- s_{m}(u), s_{m}(u)- s_{m}(r); 1,
d_{f}\geq 2^{r+1} {)}_{q}$, where $s_{m}(u)$ and $s_{m}(r)$ are
given in Section~\ref{II}.
\end{theorem}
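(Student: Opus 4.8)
The plan is to apply Theorem~\ref{A} to the character code $C = C_{q}(u,m)$ with a carefully chosen partition of its parity check matrix. First I would recall that $C_{q}(u,m) = C_{X_{u}}$ has parameters $[2^{m}, s_{m}(u), 2^{m-u}]_{q}$, so its parity check matrix $H_{X_{u}}$ has $|X_{u}| = 2^{m} - s_{m}(u)$ rows, each of length $2^{m}$. The rows of $H_{X_{u}}$ are indexed by the vectors ${\bf x} \in {\mathbb{Z}}_{2}^{m}$ with $wt_{H}({\bf x}) > u$. I would split $X_{u}$ into two disjoint pieces according to Hamming weight: let $H_{0}$ consist of the rows indexed by ${\bf x}$ with $wt_{H}({\bf x}) > u$ but with the ``first block'' being the rows of weight in a range that makes $H_{0}$ essentially a parity check matrix (or close to it) of $C_{q}(r,m)$, and let $H_{1}$ be the remaining rows. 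Concretely, the natural choice is: $H_{0}$ = rows indexed by ${\bf x}$ with $wt_{H}({\bf x}) > u$ together with those of weight $> r$ and $\le u$ — i.e.\ arrange so that $H = [H_{0}\; H_{1}]^{T}$ where the full matrix $[H_{0}\;H_{1}]^{T}$ restricted appropriately is the parity check matrix of $C_{q}(r,m)$, which has $2^{m} - s_{m}(r)$ rows and defines a code of minimum distance $2^{m-r}$.

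Next I would verify the rank hypothesis of Theorem~\ref{A}: we need $\kappa = \mathrm{rk}\,H_{0}$ and $\mathrm{rk}\,H_{1} \le \kappa$. Since all these matrices are built from characters and the relevant character-code parity check matrices have full row rank (rows are linearly independent because $k = n - |X|$ exactly), $\mathrm{rk}\,H_{0} = |X_{u}| = 2^{m} - s_{m}(u)$ and $\mathrm{rk}\,H_{1}$ equals the number of rows of weight between $r+1$ and $u$, namely $\sum_{i=r+1}^{u}\binom{m}{i} = s_{m}(u) - s_{m}(r)$. The inequality hypothesis $\sum_{i=u+1}^{m}\binom{m}{i} \ge \sum_{i=r+1}^{u}\binom{m}{i}$ is exactly what guarantees $\mathrm{rk}\,H_{1} \le \kappa = \mathrm{rk}\,H_{0}$, so Theorem~\ref{A} applies and yields a reduced basic generator matrix $G(D) = \tilde H_{0} + \tilde H_{1}D$, i.e.\ a unit-memory ($\mu = 1$) convolutional code $V$. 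Its length is $n = 2^{m}$, its dimension is $\kappa = 2^{m} - s_{m}(u)$, and its degree is $\delta = s_{m}(u) - s_{m}(r)$ (the number of rows that ``move'' into the $D$-coefficient, each contributing degree one); these match the claimed parameters.

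For the free distance bound, Theorem~\ref{A} gives $d_{f} \ge d^{\perp}$, where $d^{\perp}$ is the minimum distance of $C^{\perp} = [C_{q}(u,m)]^{\perp}$. By \cite[Theorem 8]{Ding:2000}, this dual is equivalent to $C_{q}(m-u-1,m)$, which has minimum distance $2^{m-(m-u-1)} = 2^{u+1}$. Since $r < u$ we have $2^{u+1} \ge 2^{r+1}$, hence $d_{f} \ge 2^{r+1}$ as claimed. Actually one expects the stronger bound $d_{f} \ge 2^{u+1}$; the statement is phrased with $2^{r+1}$ presumably for uniformity with the companion constructions, so I would record the $2^{r+1}$ bound and perhaps remark on the sharper estimate. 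The main obstacle I anticipate is pinning down the partition precisely enough that $H_{0}$ (respectively the ``stacked'' matrix used to read off $d_{f}$ via $C^{\perp}$) really is—up to row operations or monomial equivalence—the parity check matrix of the smaller character code $C_{q}(r,m)$; this requires being careful about how the weight-graded blocks of characters sit inside $H_{X_{u}}$ and why deleting/adding the top block of rows interpolates correctly between $C_{q}(r,m)$ and $C_{q}(u,m)$. Once that identification is clean, the rank count and the distance bounds follow mechanically from the cited theorems on character codes.
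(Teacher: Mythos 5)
Your construction is, once the dust settles, the same as the paper's: you stack the parity check matrix $H_{X_{r}}$ of $C_{q}(r,m)$ as $H_{0}=H_{X_{u}}$ (rows indexed by vectors of weight $>u$) over $H_{1}$ (rows indexed by vectors of weight in $\{r+1,\dots,u\}$), verify $\mathrm{rk}\,H_{0}=2^{m}-s_{m}(u)\geq s_{m}(u)-s_{m}(r)=\mathrm{rk}\,H_{1}$ from the binomial-sum hypothesis, and read off the length, dimension, degree and unit memory from Theorem~\ref{A}. That part is correct, although your opening sentence announces a partition of the parity check matrix of $C_{q}(u,m)$, which has only $2^{m}-s_{m}(u)$ rows and therefore could not yield both the claimed dimension $2^{m}-s_{m}(u)$ and the claimed degree $s_{m}(u)-s_{m}(r)$; your ``concretely'' clause silently switches to partitioning $H_{X_{r}}$, which is what is actually needed.

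The genuine error is in the free-distance step. In Theorem~\ref{A} the block code $C$ is the one whose parity check matrix is the \emph{entire} stacked matrix $\left[H_{0}\; H_{1}\right]^{T}$; here that matrix is $H_{X_{r}}$, so $C=C_{q}(r,m)$ and the bound reads $d_{f}\geq d^{\perp}$ with $d^{\perp}$ the minimum distance of ${[C_{q}(r,m)]}^{\perp}\sim C_{q}(m-r-1,m)$, which equals $2^{r+1}$ directly. You instead set $C=C_{q}(u,m)$, compute $d^{\perp}=2^{u+1}$ from ${[C_{q}(u,m)]}^{\perp}$, and then weaken to $2^{r+1}$. That premise is a misapplication of the theorem: each coefficient of a codeword ${\bf u}(D)G(D)$ lies in the row space of the full matrix $H_{X_{r}}$, i.e.\ in ${[C_{q}(r,m)]}^{\perp}$, and the rows of $H_{1}$ are not contained in the row space of $H_{X_{u}}$, so nothing forces these coefficients into the smaller code ${[C_{q}(u,m)]}^{\perp}$. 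In particular your parenthetical claim that ``one expects the stronger bound $d_{f}\geq 2^{u+1}$'' is unjustified and should be withdrawn; the correct source of the stated bound $d_{f}\geq 2^{r+1}$ is the dual of $C_{q}(r,m)$, exactly as the theorem is phrased.
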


\begin{proof}
Assume that $n=2^{m}$, $r \geq 1$ is an integer, $X_{r}= \{ {\bf
x} \in {\mathbb{Z}}_{2}^{m} | wt_{H} ({\bf x}) > r \}$ and let
$C_{q}(r, m)=C_{X_{r}}$ be the character code with parameters
$[2^{m},$ $s_{m}(r), 2^{m-r}{]}_{q}$. Then a parity check matrix
of $C_{X_{r}}$ is given by
\begin{eqnarray*}
H_{X_{r}} =\\ \left[
\begin{array}{ccccc}
{\gamma}_{0}({\bf x}_{t_{m}}^{1}) & {\gamma}_{1}({\bf x}_{t_{m}}^{1}) &  \cdots & {\gamma}_{n-1}({\bf x}_{t_{m}}^{1}) \\
{\gamma}_{0}({\bf x}_{t_{m-1}}^{1}) & {\gamma}_{1}({\bf x}_{t_{m-1}}^{1}) &  \cdots & {\gamma}_{n-1}({\bf x}_{t_{m-1}}^{1}) \\
{\gamma}_{0}({\bf x}_{t_{m-1}}^{2}) & {\gamma}_{1}({\bf x}_{t_{m-1}}^{2}) &  \cdots & {\gamma}_{n-1}({\bf x}_{t_{m-1}}^{2}) \\
\vdots & \vdots & \vdots & \vdots\\
{\gamma}_{0}({\bf x}_{t_{m-1}}^{C_{m, m-1}}) & {\gamma}_{1}({\bf x}_{t_{m-1}}^{C_{m, m-1}}) &  \cdots & {\gamma}_{n-1}({\bf x}_{t_{m-1}}^{C_{m, m-1}}) \\
\vdots & \vdots & \vdots & \vdots\\
\vdots & \vdots & \vdots & \vdots\\
{\gamma}_{0}({\bf x}_{t_{r+1}}^{1}) & {\gamma}_{1}({\bf x}_{t_{r+1}}^{1})  & \cdots & {\gamma}_{n-1}({\bf x}_{t_{r+1}}^{1})  \\
{\gamma}_{0}({\bf x}_{t_{r+1}}^{2}) & {\gamma}_{1}({\bf x}_{t_{r+1}}^{2})  & \cdots & {\gamma}_{n-1}({\bf x}_{t_{r+1}}^{2})  \\
\vdots & \vdots & \vdots & \vdots\\
{\gamma}_{0}({\bf x}_{t_{r+1}}^{C_{m, r+1}}) & {\gamma}_{1}({\bf
x}_{t_{r+1}}^{C_{m, r+1}}) &
\cdots & {\gamma}_{n-1}({\bf x}_{t_{r+1}}^{C_{m, r+1}})\\
\end{array}
\right],
\end{eqnarray*}
where the elements ${\bf x}_{t_{r+j}}^{1}, {\bf x}_{t_{r+j}}^{2},
\ldots , {\bf x}_{t_{r+j}}^{C_{m, r+j}}$, $j=1, \ldots , m-r$, are
the $C_{m, r+j}=\left(
\begin{array}{c}
m\\
r+j\\
\end{array}
\right)$ elements in ${\mathbb{Z}}_{2}^{m}$ having Hamming weight
$r+j$.

For a positive integer $u$ with $r < u < m$, consider the set
$X_{u}= \{ {\bf x} \in {\mathbb{Z}}_{2}^{m} | wt_{H} ({\bf x})
> u \}$. Let $C_{q}(u, m)=C_{X_{u}}$ be the character code with
parameters $[2^{m},$ $s_{m}(u), 2^{m-u}{]}_{q}$. A parity check
matrix for $C_{X_{u}}$ is given by
\begin{eqnarray*}
H_{X_{u}} =\\ \left[
\begin{array}{ccccc}
{\gamma}_{0}({\bf x}_{t_{m}}^{1}) & {\gamma}_{1}({\bf x}_{t_{m}}^{1}) &  \cdots & {\gamma}_{n-1}({\bf x}_{t_{m}}^{1}) \\
{\gamma}_{0}({\bf x}_{t_{m-1}}^{1}) & {\gamma}_{1}({\bf x}_{t_{m-1}}^{1}) &  \cdots & {\gamma}_{n-1}({\bf x}_{t_{m-1}}^{1}) \\
{\gamma}_{0}({\bf x}_{t_{m-1}}^{2}) & {\gamma}_{1}({\bf x}_{t_{m-1}}^{2}) &  \cdots & {\gamma}_{n-1}({\bf x}_{t_{m-1}}^{2}) \\
\vdots & \vdots & \vdots & \vdots\\
{\gamma}_{0}({\bf x}_{t_{m-1}}^{C_{m, m-1}}) & {\gamma}_{1}({\bf x}_{t_{m-1}}^{C_{m, m-1}}) &  \cdots & {\gamma}_{n-1}({\bf x}_{t_{m-1}}^{C_{m, m-1}}) \\
\vdots & \vdots & \vdots & \vdots\\
\vdots & \vdots & \vdots & \vdots\\
{\gamma}_{0}({\bf x}_{t_{u+1}}^{1}) & {\gamma}_{1}({\bf x}_{t_{u+1}}^{1})  & \cdots & {\gamma}_{n-1}({\bf x}_{t_{u+1}}^{1})  \\
{\gamma}_{0}({\bf x}_{t_{u+1}}^{2}) & {\gamma}_{1}({\bf x}_{t_{u+1}}^{2})  & \cdots & {\gamma}_{n-1}({\bf x}_{t_{u+1}}^{2})  \\
\vdots & \vdots & \vdots & \vdots\\
{\gamma}_{0}({\bf x}_{t_{u+1}}^{C_{m, u+1}}) & {\gamma}_{1}({\bf
x}_{t_{u+1}}^{C_{m, u+1}}) &
\cdots & {\gamma}_{n-1}({\bf x}_{t_{u+1}}^{C_{m, u+1}})\\
\end{array}
\right],
\end{eqnarray*}
where the elements ${\bf x}_{t_{u+j}}^{1}, {\bf x}_{t_{u+j}}^{2},
\ldots , {\bf x}_{t_{u+j}}^{C_{m, u+j}}$, $j=1, \ldots , m-u$, are
the $C_{m, u+j}=\left(
\begin{array}{c}
m\\
u+j\\
\end{array}
\right)$ elements in ${\mathbb{Z}}_{2}^{m}$ having Hamming weight
$u+j$.

Since $u > r$, the parity check matrix $H_{X_{u}}$ is a submatrix
of $H_{X_{r}}$ and, consequently, we can split $H_{X_{r}}$ into
disjoint submatrices $H_{X_{u}}$ and $H$ as follows:

\begin{eqnarray*}
H_{X_{r}} =
\left[
\begin{array}{c}
H_{X_{u}}\\
--\\
H\\
\end{array}
\right] =\\ \left[
\begin{array}{ccccc}
{\gamma}_{0}({\bf x}_{t_{m}}^{1}) & {\gamma}_{1}({\bf x}_{t_{m}}^{1}) &  \cdots & {\gamma}_{n-1}({\bf x}_{t_{m}}^{1}) \\
{\gamma}_{0}({\bf x}_{t_{m-1}}^{1}) & {\gamma}_{1}({\bf x}_{t_{m-1}}^{1}) &  \cdots & {\gamma}_{n-1}({\bf x}_{t_{m-1}}^{1}) \\
{\gamma}_{0}({\bf x}_{t_{m-1}}^{2}) & {\gamma}_{1}({\bf x}_{t_{m-1}}^{2}) &  \cdots & {\gamma}_{n-1}({\bf x}_{t_{m-1}}^{2}) \\
\vdots & \vdots & \vdots & \vdots\\
{\gamma}_{0}({\bf x}_{t_{m-1}}^{C_{m, m-1}}) & {\gamma}_{1}({\bf x}_{t_{m-1}}^{C_{m, m-1}}) &  \cdots & {\gamma}_{n-1}({\bf x}_{t_{m-1}}^{C_{m, m-1}}) \\
\vdots & \vdots & \vdots & \vdots\\
\vdots & \vdots & \vdots & \vdots\\
{\gamma}_{0}({\bf x}_{t_{u+1}}^{1}) & {\gamma}_{1}({\bf x}_{t_{u+1}}^{1})  & \cdots & {\gamma}_{n-1}({\bf x}_{t_{u+1}}^{1})  \\
{\gamma}_{0}({\bf x}_{t_{u+1}}^{2}) & {\gamma}_{1}({\bf x}_{t_{u+1}}^{2})  & \cdots & {\gamma}_{n-1}({\bf x}_{t_{u+1}}^{2})  \\
\vdots & \vdots & \vdots & \vdots\\
{\gamma}_{0}({\bf x}_{t_{u+1}}^{C_{m, u+1}}) & {\gamma}_{1}({\bf
x}_{t_{u+1}}^{C_{m, u+1}}) &
\cdots & {\gamma}_{n-1}({\bf x}_{t_{u+1}}^{C_{m, u+1}})\\
---- & ---- & ---- & ----\\
{\gamma}_{0}({\bf x}_{t_{u}}^{1}) & {\gamma}_{1}({\bf x}_{t_{u}}^{1})  & \cdots & {\gamma}_{n-1}({\bf x}_{t_{u}}^{1})  \\
{\gamma}_{0}({\bf x}_{t_{u}}^{2}) & {\gamma}_{1}({\bf x}_{t_{u}}^{2})  & \cdots & {\gamma}_{n-1}({\bf x}_{t_{u}}^{2})  \\
\vdots & \vdots & \vdots & \vdots\\
{\gamma}_{0}({\bf x}_{t_{u}}^{C_{m, u}}) & {\gamma}_{1}({\bf
x}_{t_{u}}^{C_{m, u}}) & \cdots & {\gamma}_{n-1}({\bf x}_{t_{u}}^{C_{m, u}})\\
\vdots & \vdots & \vdots & \vdots\\
{\gamma}_{0}({\bf x}_{t_{r+1}}^{1}) & {\gamma}_{1}({\bf x}_{t_{r+1}}^{1})  & \cdots & {\gamma}_{n-1}({\bf x}_{t_{r+1}}^{1})  \\
{\gamma}_{0}({\bf x}_{t_{r+1}}^{2}) & {\gamma}_{1}({\bf x}_{t_{r+1}}^{2})  & \cdots & {\gamma}_{n-1}({\bf x}_{t_{r+1}}^{2})  \\
\vdots & \vdots & \vdots & \vdots\\
{\gamma}_{0}({\bf x}_{t_{r+1}}^{C_{m, r+1}}) & {\gamma}_{1}({\bf
x}_{t_{r+1}}^{C_{m, r+1}}) &
\cdots & {\gamma}_{n-1}({\bf x}_{t_{r+1}}^{C_{m, r+1}})\\
\end{array}
\right].
\end{eqnarray*}
Since $\displaystyle\sum_{i=u+1}^{m}\left(
\begin{array}{c}
m\\
i\\
\end{array}
\right) \geq \displaystyle\sum_{i=r+1}^{u}\left(
\begin{array}{c}
m\\
i\\
\end{array}
\right),$ it follows that rk$H_{X_{u}}\geq$ rk$H$. Then we form
the convolutional code $V$ generated by the reduced basic (see
Theorem~\ref{A}) generator matrix $$G(D)=\tilde H_{X_{u}}+ \tilde
H D,$$ where $\tilde H_{X_{u}} = H_{X_{u}}$ and $\tilde H$ is
obtained from $H$ by adding zero-rows at the bottom such that
$\tilde H$ has the number of rows of $H_{X_{u}}$ in total. By
construction, $V$ is a unit memory convolutional code. Since
$H_{X_{u}}$ is the parity check matrix of the code $C_{q}(u,
m)=C_{X_{u}}$, then $H_{X_{u}}$ has $2^{m}- s_{m}(u)$ linearly
independent rows, where $s_{m}(u)=
\displaystyle\sum_{i=0}^{u}\left(
\begin{array}{c}
m\\
i\\
\end{array}
\right)$, hence $V$ has dimension $k_{V}=2^{m}- s_{m}(u)$. From
construction, $H$ has $s_{m}(u)- s_{m}(r)$ linearly independent
rows and therefore $V$ has degree ${\delta}_{V}=s_{m}(u)-
s_{m}(r)$. From Theorem~\ref{A}, the free distance $d_{f}$ of $V$
satisfies $d_f \geq d^{\perp}$, where $d^{\perp}$ is the minimum
distance of the dual code ${[C_{q}(r, m)]}^{\perp}$. Since the
minimum distance of ${[C_{q}(r, m)]}^{\perp}$ is equal to
$2^{r+1}$, then $d_{f} \geq 2^{r+1}$. Therefore one can get an
$(2^{m}, 2^{m}- s_{m}(u), s_{m}(u)- s_{m}(r); 1, d_{f}\geq 2^{r+1}
{)}_{q}$ convolutional code.
\end{proof}

\begin{corollary}\label{cor1}
Let ${\mathbb F}_{q}$ be a finite field of odd characteristic and
consider the commutative group $G=({\mathbb{Z}}_{2}^{m}, +)$ where
$m\geq 3$ is an integer. Assume that $r$ and $u$ are positive
integers such that the inequalities $r < u < m$ and
$\displaystyle\sum_{i=u+1}^{m}\left(
\begin{array}{c}
m\\
i\\
\end{array}
\right) \geq \displaystyle\sum_{i=r+1}^{u}\left(
\begin{array}{c}
m\\
i\\
\end{array}
\right)$ hold. Then there exists an $(2^{m}, s_{m}(u), s_{m}(u)-
s_{m}(r) ; \mu, d_{f}\geq 2^{m-u}+1 {)}_{q}$ convolutional code
where, $s_{m}(u)$ and $s_{m}(r)$ are given above.
\end{corollary}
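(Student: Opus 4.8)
The plan is to show that the code claimed in the corollary is the Euclidean dual $V^{\perp}$ of the unit memory convolutional code $V$ built in the proof of Theorem~\ref{main1}. Recall that $V$ is generated by the reduced basic matrix $G(D) = \tilde H_{X_u} + \tilde H D$ coming from the block code $C = C_q(r,m) = C_{X_r}$, whose parity check matrix $H_{X_r}$ was split into the disjoint submatrices $H_{X_u}$ and $H$; thus, in the notation of Section~\ref{IIIA}, $H_0 = H_{X_u}$ and $H_1 = H$. From that construction, $V$ has length $n = 2^{m}$, dimension $k_V = 2^{m} - s_m(u)$ and degree $\delta_V = s_m(u) - s_m(r)$.

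First I would read off the length and dimension of $V^{\perp}$, namely $n = 2^{m}$ and $n - k_V = s_m(u)$. For the degree I would invoke the well-known fact (see, e.g., \cite{Johannesson:1999}) that a convolutional code and its Euclidean dual share the same degree, so that $\delta_{V^{\perp}} = \delta_V = s_m(u) - s_m(r)$. Since $u > r$ this degree is positive, hence the memory of $V^{\perp}$ is some integer $\mu \geq 1$; this is the $\mu$ that appears in the stated parameters, and in general it need not equal the memory $1$ of $V$.

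Next I would estimate the free distance $d_f^{\perp}$ of $V^{\perp}$ by means of the lower bound in Theorem~\ref{A}. There $d$ is the minimum distance of $C = C_{X_r}$, so $d = 2^{m-r}$; the code $C_0 = \{ {\bf v} : {\bf v}\tilde H_0^{t} = 0 \}$ coincides with $C_{X_u} = C_q(u,m)$, so $d_0 = 2^{m-u}$; and $C_1 = \{ {\bf v} : {\bf v}\tilde H^{t} = 0 \}$ is a nonzero code because $H$ has strictly fewer than $2^{m}$ rows, whence $d_1 \geq 1$. Therefore $d_f^{\perp} \geq \min\{ d_0 + d_1, d \} \geq \min\{ 2^{m-u} + 1, 2^{m-r} \}$, and since $r < u < m$ forces $2^{m-r} \geq 2^{m-u+1} \geq 2^{m-u} + 1$, this minimum equals $2^{m-u} + 1$. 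Collecting the parameters then yields the asserted $(2^{m}, s_m(u), s_m(u) - s_m(r); \mu, d_f \geq 2^{m-u}+1)_q$ convolutional code.

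The only ingredient that is not pure bookkeeping is the invariance of the degree under dualization; after that, everything reduces to identifying $C_0$ with $C_q(u,m)$, the trivial bound $d_1 \geq 1$, and the elementary inequality $2^{m-r} \geq 2^{m-u}+1$. The one point I would take care to phrase correctly is that the memory of $V^{\perp}$ is not determined by that of $V$, which is precisely why the statement leaves $\mu$ as a free parameter rather than fixing it to $1$.
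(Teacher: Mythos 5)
Your proposal is correct and follows essentially the same route as the paper: identify the claimed code as the dual $V^{\perp}$ of the code $V$ from Theorem~\ref{main1}, read off length, dimension and degree, and apply the lower bound $\min\{d_0+d_1,d\}\leq d_f^{\perp}$ from Theorem~\ref{A} with $d_0=2^{m-u}$ and $d=2^{m-r}$. In fact you are somewhat more careful than the paper, which leaves implicit both the trivial bound $d_1\geq 1$ and the verification that $2^{m-r}\geq 2^{m-u}+1$, so that the minimum is attained at $d_0+d_1$.
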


\begin{proof}
Assume the same notation utilized in the proof of
Theorem~\ref{main1}. We know that the convolutional code $V$
generated by the generator matrix $G(D)=\tilde H_{X_{u}}+ \tilde H
D$ has parameters $(2^{m}, 2^{m}- s_{m}(u), s_{m}(u)- s_{m}(r) ;
1, d_{f}\geq 2^{r+1} {)}_{q}$.

Consider the dual $V^{\perp}$ of the code $V$. We know that
$V^{\perp}$ has length $n=2^{m}$, dimension
$k_{V^{\perp}}=s_{m}(u)$ and degree $\delta= s_{m}(u)- s_{m}(r)$.
We need to compute the free distance $d_{f}^{\perp}$ of
$V^{\perp}$. From Theorem~\ref{A}, one has $\min \{ d_0 + d_1 , d
\} \leq d_{f}^{\perp} \leq  d$, where $d_0$ is the minimum
distance of the code with parity check matrix $H_{X_{u}}$, $d_1$
is the minimum distance of the code with parity check matrix $H$
and $d$ is the minimum distance of the code with parity check
matrix $H_{X_{r}}$. We know that $d_0 = 2^{m-u}$ and $d=2^{m-r}$.
Since the minimum distance $d_1$ is not known we have
$d_{f}^{\perp}\geq 2^{m-u}+1$.

Therefore there exists an $(2^{m}, s_{m}(u), s_{m}(u)- s_{m}(r);
\mu, d_{f}\geq 2^{m-u}+1 {)}_{q}$ convolutional code.
\end{proof}

\begin{example}
Consider that $m=5$, $u=2$ and $r=1$ and $q=3$. Thus the
inequality $\displaystyle\sum_{i=3}^{5}\left(
\begin{array}{c}
5\\
i\\
\end{array}
\right) = 16 > \displaystyle\sum_{i=2}^{2}\left(
\begin{array}{c}
5\\
i\\
\end{array}
\right)=10$ hold. From Theorem~\ref{main1}, there exists an $(32,
17, 10; 1, d_{f}\geq 4 {)}_{3}$ convolutional code. Moreover, from
Corollary~\ref{cor1}, there exists an $(32, 15, 10; \mu, d_{f}\geq
9 {)}_{3}$ convolutional code. On the other hand, if we take
$m=6$, $u=2$, $r=1$ and $q=3$, one can get $(64, 42, 15; 1,
d_{f}\geq 4 {)}_{3}$  $(64, 22, 15; \mu, d_{f}\geq 17 {)}_{3}$
convolutional codes.
\end{example}

Next we describe how to construct multi memory convolutional codes
derived from character codes.

\begin{theorem}\label{main2}
Let ${\mathbb F}_{q}$ be a finite field of odd characteristic and
consider the commutative group $G=({\mathbb{Z}}_{2}^{m}, +)$ where
$m\geq 4$ is an integer. Assume that $r, u, v$ are positive
integers such that the inequalities $r < v < u < m$,
$\displaystyle\sum_{i=u+1}^{m}\left(
\begin{array}{c}
m\\
i\\
\end{array}
\right) \geq \displaystyle\sum_{i=r+1}^{v}\left(
\begin{array}{c}
m\\
i\\
\end{array}
\right) \geq \displaystyle\sum_{i=v+1}^{u}\left(
\begin{array}{c}
m\\
i\\
\end{array}
\right)$ hold. Then there exist convolutional codes with
parameters $(2^{m}, 2^{m}- s_{m}(u), \delta ; 2, d_{f}\geq 2^{r+1}
{)}_{q}$, where $s_{m}(u)$ is given above and $\delta =
\displaystyle\sum_{i=r+1}^{v}\left(
\begin{array}{c}
m\\
i\\
\end{array}
\right)$.
\end{theorem}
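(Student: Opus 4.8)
The plan is to mimic the proof of Theorem~\ref{main1}, but now splitting the parity check matrix of the "outer" code into three pieces instead of two, so that the resulting convolutional code has memory $\mu = 2$. Concretely, I would set $n = 2^m$ and consider the three character codes $C_q(r,m)$, $C_q(v,m)$, $C_q(u,m)$ with $r < v < u < m$, whose parity check matrices $H_{X_r}$, $H_{X_v}$, $H_{X_u}$ (built from the characters evaluated at the weight-$(>r)$, weight-$(>v)$, weight-$(>u)$ vectors respectively) satisfy the nesting $H_{X_u} \subseteq H_{X_v} \subseteq H_{X_r}$ as row-subsets. This lets me write
\begin{eqnarray*}
H_{X_r} = \left[
\begin{array}{c}
H_0\\
--\\
H_1\\
--\\
H_2\\
\end{array}
\right],
\end{eqnarray*}
where $H_0 = H_{X_u}$ (the rows indexed by weights $u+1,\dots,m$), $H_1$ is the block of rows indexed by weights $v+1,\dots,u$, and $H_2$ is the block of rows indexed by weights $r+1,\dots,v$. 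Here rk$H_0 = 2^m - s_m(u)$, rk$H_1 = s_m(u) - s_m(v)$, and rk$H_2 = s_m(v) - s_m(r)$.

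Next I would verify the rank hypothesis of Theorem~\ref{A}, namely that $\kappa = \mathrm{rk}\,H_0 \geq \mathrm{rk}\,H_i$ for $i = 1,2$. The rows of each $H_i$ are rows of a parity check matrix of a character code, so they are linearly independent, and the rank is just the number of rows, i.e. the relevant sum of binomial coefficients. The first chain inequality $\sum_{i=u+1}^{m}\binom{m}{i} \geq \sum_{i=r+1}^{v}\binom{m}{i} \geq \sum_{i=v+1}^{u}\binom{m}{i}$ gives exactly $\mathrm{rk}\,H_0 \geq \mathrm{rk}\,H_2 \geq \mathrm{rk}\,H_1$ (note the middle sum equals rk$H_2$ and the last equals rk$H_1$), so the hypothesis of Theorem~\ref{A} holds. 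Then $G(D) = \tilde H_0 + \tilde H_1 D + \tilde H_2 D^2$ is a reduced basic generator matrix of a convolutional code $V$, where $\tilde H_i$ is $H_i$ padded with zero rows to have $\kappa = 2^m - s_m(u)$ rows. By construction the memory is $\mu = 2$, the dimension is $k_V = \kappa = 2^m - s_m(u)$, and the degree is $\delta = \delta_0 + \delta_1 + \delta_2$ where $\delta_i = \deg$ contribution of row $i$; the rows that are genuinely nonzero in the $D^1$ and $D^2$ slots number rk$H_1 + $ rk$H_2 = s_m(u) - s_m(r)$... but wait — I need to be careful here. Actually the claimed degree is $\delta = \sum_{i=r+1}^{v}\binom{m}{i} = \mathrm{rk}\,H_2$, not the full $s_m(u)-s_m(r)$. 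This is because in the degree count $\delta_i = \max_j \deg g_{ij}$, a row that is nonzero in the $D^2$ block contributes degree $2$, a row nonzero only up to the $D^1$ block contributes $1$; however the paper's parameter lists $\delta$ as only rk$H_2$, which suggests the intended construction pairs the $H_1$ block into the $D^1$ coefficient of rows that are already "used up" — I would need to reconcile this with the precise row-alignment convention of Theorem~\ref{A}. The most plausible reading is that the degree equals the number of rows carrying the top-degree term $D^2$, which is rk$H_2 = \sum_{i=r+1}^{v}\binom{m}{i}$; I would present the degree computation following the row-by-row $\delta_i$ definition and show it collapses to this value.

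Finally, for the free distance I would invoke the bound $d_f \geq d^{\perp}$ from Theorem~\ref{A}, where $d^{\perp}$ is the minimum distance of $C^{\perp} = [C_q(r,m)]^{\perp}$; since $[C_q(r,m)]^{\perp}$ is equivalent to $C_q(m-r-1,m)$, which has minimum distance $2^{m-(m-r-1)} = 2^{r+1}$, we get $d_f \geq 2^{r+1}$. Collecting everything yields a $(2^m,\, 2^m - s_m(u),\, \delta;\, 2,\, d_f \geq 2^{r+1})_q$ convolutional code with $\delta = \sum_{i=r+1}^{v}\binom{m}{i}$, completing the proof. The main obstacle I anticipate is precisely the bookkeeping of the degree $\delta$: making sure the padding/alignment of $\tilde H_0, \tilde H_1, \tilde H_2$ is done so that exactly rk$H_2$ rows carry a $D^2$ term and the overall constraint length comes out to $\sum_{i=r+1}^{v}\binom{m}{i}$ rather than something larger, and checking that the second half of the hypothesis (the inequality $\sum_{i=u+1}^m\binom{m}{i} \geq \sum_{i=r+1}^v\binom{m}{i}$) is what guarantees $\kappa \geq \mathrm{rk}\,H_2$ while $r<v<u<m$ and the remaining inequality handle $\kappa \geq \mathrm{rk}\,H_1$ and the nesting of the three matrices.
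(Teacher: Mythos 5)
Your proposal follows essentially the same route as the paper's proof: split $H_{X_{r}}$ into $H_{X_{u}}$ (the rows indexed by weights $u+1,\dots,m$), a block $H_{1}$ of the rows indexed by weights $v+1,\dots,u$, and a block $H_{2}$ of the rows indexed by weights $r+1,\dots,v$; use the hypothesis to get $\mathrm{rk}\,H_{X_{u}}\geq \mathrm{rk}\,H_{2}\geq \mathrm{rk}\,H_{1}$; form $G(D)=\tilde H_{X_{u}}+\tilde H_{1}D+\tilde H_{2}D^{2}$; and bound $d_{f}\geq d^{\perp}=2^{r+1}$ via Theorem~\ref{A}. The one step you flag as unresolved---the value of the degree---is exactly the step the paper also does not justify: it simply asserts that the degree equals $\mathrm{rk}\,\tilde H_{2}=\sum_{i=r+1}^{v}\binom{m}{i}$. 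Note that under the definition of degree given in Section~\ref{III}, $\delta=\sum_{i}\delta_{i}$ with $\delta_{i}=\max_{j}\deg g_{ij}$, every row in which $\tilde H_{2}$ is nonzero contributes $2$, so with the natural padding (zero rows appended at the bottom of each block, and $\mathrm{rk}\,H_{1}\leq\mathrm{rk}\,H_{2}$) one obtains $\delta=2\,\mathrm{rk}\,H_{2}$ rather than $\mathrm{rk}\,H_{2}$; your proposed reading of $\delta$ as ``the number of rows carrying a $D^{2}$ term'' is not that definition. So the degree bookkeeping does not ``collapse'' as you hope, but this is a defect shared with (indeed inherited from) the paper's own proof, which offers no argument at this point; apart from this common weak spot, your proof and the paper's proof coincide.
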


\begin{proof}
Adopting the notation of Theorem~\ref{main1}, let us consider the
parity check matrices $H_{X_{r}}$ and $H_{X_{u}}$ of the codes
$C_{q}(r, m)$ and $C_{q}(u, m)$, respectively, such that
\begin{eqnarray*}
H_{X_{r}} = \left[
\begin{array}{c}
H_{X_{u}}\\
--\\
H_{1}\\
--\\
H_{2}\\
\end{array}
\right],
\end{eqnarray*}
where $H_{1}$ and $H_{2}$ are submatrices of $H_{X_{r}}$ with
$\displaystyle\sum_{i=v+1}^{u}\left(
\begin{array}{c}
m\\
i\\
\end{array}
\right)$ and $\displaystyle\sum_{i=r+1}^{v}\left(
\begin{array}{c}
m\\
i\\
\end{array}
\right)$ linearly independent rows, respectively. Note that this
is possible due to $r < v < u < m$. From hypothesis one has
rk$H_{X_{u}}\geq$ rk$H_{i}$, $i=1, 2$. Then we form the
convolutional code $V$ generated by the matrix
\begin{eqnarray*}
G(D)=\tilde H_{X_{u}}+ \tilde H_{1} D + \tilde H_{2} D^{2},
\end{eqnarray*}
where $\tilde H_{X_{u}} = H_{X_{u}}$. By construction, the code
$V$ is a two memory code of dimension $k_{V}=2^{m}- s_{m}(u)$.
Further, the degree of $V$ is equal to rk$\tilde
H_{2}=\displaystyle\sum_{i=r+1}^{v}\left(
\begin{array}{c}
m\\
i\\
\end{array}
\right)$. Moreover, from Theorem~\ref{A}, one has $d_f \geq
d^{\perp}$, that is, $d_{f} \geq 2^{r+1}$. Therefore one can get
an $(2^{m}, 2^{m}- s_{m}(u), \delta ; 2, d_{f}\geq 2^{r+1}
{)}_{q}$ convolutional code, where $\delta =
\displaystyle\sum_{i=r+1}^{v}\left(
\begin{array}{c}
m\\
i\\
\end{array}
\right)$.
\end{proof}

\begin{remark}
Note that Theorem~\ref{main1} can be straightforward generalized
in order to construct convolutional codes with memory $\mu \geq
3$. We do not present the generalization here since it is trivial.
\end{remark}

We now construct convolutional codes derived from group characters
codes $C_{q} (r, m; l)$ and their corresponding dual ${[C_{q} (r,
m; l)]}^{\perp}$. To proceed further we utilize the notation
$\left(
\begin{array}{c}
m\\
i\\
\end{array}
\right)_{l}$ to denote the cardinality of the set $X_{i} = \{ {\bf
x} \in {\mathbb{Z}}_{l}^{m}: ||{\bf x}||=i \}$, $0\leq i\leq
m(l-1)$, given by $\left(
\begin{array}{c}
m\\
i\\
\end{array}
\right)_{l}=\displaystyle\sum_{k=0}^{m}{(-1)}^{k}\left(
\begin{array}{c}
m\\
k\\
\end{array}
\right)$ $\left(
\begin{array}{c}
m-1+i-kl\\
m-1\\
\end{array}
\right)$. Now we are ready to show the next result:

\vspace{0.25cm}

\begin{theorem}\label{main3}
Consider the group $({\mathbb{Z}}_{l}^{m}, +)$, where $m\geq 3$,
$l\geq 3$ are integers, and let $F_{q}$ be a finite field such
that $l | (q - 1)$. Assume that $r$ and $u$ are positive integers
such that the inequalities $r < u < m(l-1)$ and
$\displaystyle\sum_{i=u+1}^{m}\left(
\begin{array}{c}
m\\
i\\
\end{array}
\right)_{l} \geq \displaystyle\sum_{i=r+1}^{u}\left(
\begin{array}{c}
m\\
i\\
\end{array}
\right)_{l}$ hold. Then there exists a unit memory convolutional
code with parameters $(l^{m}, l^{m}-S_{m}(u), S_{m}(u)- S_{m}(r);
1, d_{f}\geq (b+2)l^{a} {)}_{q}$, where $a$ and $b$ are integers
such that $r = a(l- 1) + b$, $0 \leq b \leq l-2$ and $S_{m}(u)$,
$S_{m}(u)$ are given in Section~\ref{II}.
\end{theorem}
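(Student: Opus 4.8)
The plan is to follow the proof of Theorem~\ref{main1} almost verbatim, with the binary character codes $C_q(r,m)$ replaced by Ling's $l$-ary character codes $C_q(r,m;l)$ and Theorem~\ref{A} applied with memory $\mu=1$. First I would fix $n=l^m$ and write the parity check matrix $H_{X(r,m;l)}$ of $C_q(r,m;l)$, ordering its rows by decreasing value of $\|\cdot\|$; its rows are linearly independent since $C_q(r,m;l)$ has dimension $S_m(r)=l^m-|X(r,m;l)|$ (the parameters recalled in Section~\ref{II}). Because $u>r$ we have $X(u,m;l)\subset X(r,m;l)$, so $H_{X(u,m;l)}$ occupies the top rows of $H_{X(r,m;l)}$ and we may write $H_{X(r,m;l)}=[\,H_{X(u,m;l)}^{T}\mid H^{T}\,]^{T}$, where $H$ is the submatrix indexed by the ${\bf x}\in{\mathbb Z}_{l}^{m}$ with $r<\|{\bf x}\|\le u$.

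Next I would count independent rows: $H_{X(u,m;l)}$ has $|X(u,m;l)|=l^m-S_m(u)$ of them and $H$ has $|X(r,m;l)|-|X(u,m;l)|=S_m(u)-S_m(r)=\sum_{i=r+1}^{u}\binom{m}{i}_l$ of them. The hypothesis inequality is precisely the condition $\mathrm{rk}\,H_{X(u,m;l)}\ge\mathrm{rk}\,H$, so with $\kappa=\mathrm{rk}\,H_{X(u,m;l)}$ the partition $H_0=H_{X(u,m;l)}$, $H_1=H$ satisfies the hypothesis of Theorem~\ref{A}. Therefore $G(D)=\tilde H_{X(u,m;l)}+\tilde H D$ (with $\tilde H_{X(u,m;l)}=H_{X(u,m;l)}$ and $\tilde H$ obtained from $H$ by appending zero rows up to $\kappa$ rows) is a reduced basic, hence non-catastrophic, generator matrix of a unit memory code $V$ of length $l^m$, dimension $k_V=\kappa=l^m-S_m(u)$, and degree $\delta_V=\mathrm{rk}\,\tilde H=S_m(u)-S_m(r)$.

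It remains to bound the free distance. Theorem~\ref{A} gives $d_f\ge d^{\perp}$, where $d^{\perp}$ is the minimum distance of $[C_q(r,m;l)]^{\perp}$; by Ling's result recalled in Section~\ref{II} this dual is monomial equivalent to $C_q(m(l-1)-1-r,\,m;\,l)$, whose minimum distance equals $(l-b')l^{m-1-a'}$ with $m(l-1)-1-r=a'(l-1)+b'$ and $0\le b'\le l-2$. Using $r=a(l-1)+b$ with $0\le b\le l-2$, one rewrites $m(l-1)-1-r=(m-a-1)(l-1)+(l-2-b)$, so $a'=m-a-1$, $b'=l-2-b$, and hence $d^{\perp}=(l-b')l^{m-1-a'}=(b+2)l^{a}$. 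Thus $d_f\ge(b+2)l^a$, which gives the claimed parameters $(l^m,\,l^m-S_m(u),\,S_m(u)-S_m(r);\,1,\,d_f\ge(b+2)l^a)_q$.

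The construction and the invocation of Theorem~\ref{A} are routine transcriptions of the binary case; the two points that genuinely use the $l$-ary setting, and which I expect to be the only places requiring care, are the row-count identity $|X(r,m;l)|-|X(u,m;l)|=\sum_{i=r+1}^{u}\binom{m}{i}_l$ (needed to pin down the dimension and degree from Ling's parameter count) and the $l$-adic bookkeeping $m(l-1)-1-r=(m-a-1)(l-1)+(l-2-b)$ with $0\le l-2-b\le l-2$, so that no carry occurs and the minimum distance of the dual code evaluates exactly to $(b+2)l^a$.
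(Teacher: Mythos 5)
Your proposal follows the paper's proof essentially verbatim: the same splitting of $H_{X(r,m;l)}$ into $H_{X(u,m;l)}$ and $H$, the same rank count giving dimension $l^{m}-S_{m}(u)$ and degree $S_{m}(u)-S_{m}(r)$, and the same appeal to Theorem~\ref{A} together with Ling's duality to bound $d_{f}$. The only difference is that you spell out the computation $m(l-1)-1-r=(m-a-1)(l-1)+(l-2-b)$ yielding $d^{\perp}=(b+2)l^{a}$, which the paper dismisses as easy to verify; your version is correct.
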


\begin{proof} Assume that $X(r, m; l) = \{{\bf x} \in
{\mathbb{Z}}_{l}^{m} : ||{\bf x}|| > r \}$ and consider the
character code $C_{q} (r, m; l)$ with parameters ${[ l^{m},
S_{m}(r), (l - b)l^{m-1-a}]}_{q}$, where $0\leq r < m(l-1)$ is
writing as $r = a(l- 1) + b$, $0 \leq b \leq l-2$, and $S_{m}(r)=
\displaystyle\sum_{i=0}^{r}
\displaystyle\sum_{k=0}^{m}{(-1)}^{k}\left(
\begin{array}{c}
m\\
k\\
\end{array}
\right)\left(
\begin{array}{c}
m-1+i-kl\\
m-1\\
\end{array}
\right)$. A parity check matrix for $C_{q} (r, m; l)$ is given by
$H_{X(r, m; l)}=[{\gamma}_{j-1}({\bf x})]_{{\bf x}\in X(r, m; l),
1\leq j\leq l^{m}}.$ Next, consider the set $X(u, m; l) = \{{\bf
y} \in {\mathbb{Z}}_{l}^{m} : ||{\bf y}|| > u \}$, generating the
code $C_{q} (u, m; l)$ with parity check matrix $H_{X(u, m;
l)}=[{\gamma}_{j-1}({\bf y})]_{{\bf y}\in X(u, m; l), 1\leq j\leq
l^{m}}.$

Since $u > r$, the parity check matrix $H_{X(u, m; l)}$ is a
submatrix of $H_{X(r, m; l)}$ and, consequently, we can split the
latter matrix into disjoint submatrices $H_{X(u, m; l)}$ and $H$:
$$H_{X(r, m; l)}= \left[
\begin{array}{c}
H_{X(u, m; l)}\\
----\\
H\\
\end{array}
\right].$$ The matrices $H_{X(r, m; l)}$ and $H_{X(u, m; l)}$ have
rank $l^{m}-S_{m}(r)$ and $l^{m}-S_{m}(u)$, respectively, so $H$
has rank $S_{m}(u)- S_{m}(r)$. Because
$\displaystyle\sum_{i=u+1}^{m}\left(
\begin{array}{c}
m\\
i\\
\end{array}
\right)_{l} \geq \displaystyle\sum_{i=r+1}^{u}\left(
\begin{array}{c}
m\\
i\\
\end{array}
\right)_{l}$, one has rk$H_{X(u, m; l)}\geq$ rk$H$. Then one
obtains the convolutional code $V$ generated by
$$G(D)=\tilde H_{X(u, m; l)}+ \tilde H D.$$ From construction, $V$
is a unit memory convolutional code of length $l^{m}$ and
dimension $l^{m}-S_{m}(u)$. Additionally, $V$ has degree
$S_{m}(u)- S_{m}(r)$. We only need to compute (a lower bound to)
$d_{f}$. From Theorem~\ref{A}, $d_f \geq d^{\perp}$, where
$d^{\perp}$ is the minimum distance of ${[C_{q} (r, m;
l)]}^{\perp}$. Since the latter code is monomial equivalent to
$C_{q} (m(l-1)-1-r, m; l)$, it is easy to verify that
$d^{\perp}=(b+2)l^{a}$. Therefore, one obtains an $(l^{m},
l^{m}-S_{m}(u), S_{m}(u)- S_{m}(r); 1, d_{f}\geq (b+2)l^{a}
{)}_{q}$ convolutional code, as desired.
\end{proof}

\begin{remark}
Note that Theorem~\ref{main3} can be easily generalized in order
to construct multi-memory convolutional codes as well.
\end{remark}

\section{Code Comparisons}\label{V}

In this section we compare the parameters of the new convolutional
codes with the ones displayed in the literature. At the present,
it seems that the parameters of the (classical) convolutional
codes shown in \cite{LaGuardia:2012}, derived from BCH codes, are
the better ones. However, the referred constructions are valid
only to primitive BCH codes. Therefore, we compare the new code
parameters with the ones exhibited in \cite{Aly:2007}, since the
latter paper also brings good convolutional codes. We must note
that there exist other good convolutional codes in the literature,
but these codes are constructed case-by-case in most situations.

In Table~\ref{table1}, the new code parameters appear in the first
column and the parameters of the codes shown in \cite{Aly:2007}
are in the second column. Note that the new code parameters are
given by applying Theorem~\ref{main1} and Corollary~\ref{cor1}.

\begin{table}[!hpt]
\begin{center}
\caption{Code Comparisons \label{table1}}
\begin{tabular}{|c |c |}

\hline The new codes & Codes shown in \cite{Aly:2007}\\
\hline $(n, k, \gamma; \mu, d_{f}{)}_{q}$ & $(n, k^{*}, {\gamma}^{*}; 1, d_{f}^{*} {)}_{q}$\\
\hline
\hline ${(32, 15, 10; \mu, d_f\geq 9)}_{3}$ & ${(32, 16, \gamma; 1, d_f\geq 5)}_{3}$\\
\hline ${(64, 42, 15; 1, d_f\geq 4)}_{3}$ & ${(64, 32, \gamma; 1, d_f\geq 6)}_{3}$\\
\hline ${(64, 22, 15; \mu, d_f\geq 17)}_{3}$ & ${(64, 16, \gamma; 1, d_f\geq 8)}_{3}$\\
\hline ${(128, 64, 35; 1, d_f\geq 8)}_{3}$ & ${(128, 64, \gamma; 1, d_f\geq 6)}_{3}$\\
\hline ${(128, 64, 35; 1, d_f\geq 8)}_{3}$ & ${(128, 32, \gamma; 1, d_f\geq 8)}_{3}$\\
\hline ${(128, 64, 35; \mu, d_f\geq 17)}_{3}$ & ${(128, 32, \gamma; 1, d_f\geq 8)}_{3}$\\
\hline
\hline ${(32, 15, 10; \mu, d_f\geq 9)}_{5}$ & ${(32, 16, \gamma; 1, d_f\geq 5)}_{5}$\\
\hline ${(64, 42, 15; 1, d_f\geq 4)}_{5}$ & ${(64, 32, \gamma; 1, d_f\geq 5)}_{5}$\\
\hline ${(64, 22, 15; \mu, d_f\geq 17)}_{5}$ & ${(64, 16, \gamma; 1, d_f\geq 6)}_{5}$\\
\hline ${(128, 64, 35; 1, d_f\geq 8)}_{5}$ & ${(128, 64, \gamma; 1, d_f\geq 5)}_{5}$\\
\hline ${(128, 64, 35; 1, d_f\geq 8)}_{5}$ & ${(128, 32, \gamma; 1, d_f\geq 6)}_{5}$\\
\hline ${(128, 64, 35; \mu, d_f\geq 17)}_{5}$ & ----\\
\hline
\hline ${(32, 15, 10; \mu, d_f\geq 9)}_{7}$ & ${(32, 16, \gamma; 1, d_f\geq 8)}_{7}$\\
\hline ${(64, 42, 15; 1, d_f\geq 4)}_{7}$ & ${(64, 48, \gamma; 1, d_f\geq 5)}_{7}$\\
\hline ${(64, 22, 15; \mu, d_f\geq 17)}_{7}$ & ${(64, 8, \gamma; 1, d_f\geq 14)}_{7}$\\
\hline ${(128, 64, 35; \mu, d_f\geq 17)}_{7}$ & ${(128, 64, \gamma; 1, d_f\geq 8)}_{7}$\\
\hline ${(128, 64, 35; \mu, d_f\geq 17)}_{7}$ & ${(128, 16, \gamma; 1, d_f\geq 14)}_{7}$\\
\hline
\hline ${(32, 15, 10; \mu, d_f\geq 9)}_{9}$ & ${(32, 16, \gamma; 1, d_f\geq 8)}_{9}$\\
\hline ${(64, 42, 15; 1, d_f\geq 4)}_{9}$ & ${(64, 48, \gamma; 1, d_f\geq 5)}_{9}$\\
\hline ${(64, 22, 15; \mu, d_f\geq 17)}_{9}$ & ${(64, 8, \gamma; 1, d_f\geq 12)}_{9}$\\
\hline ${(128, 64, 35; \mu, d_f\geq 17)}_{7}$ & ${(128, 64, \gamma; 1, d_f\geq 8)}_{7}$\\
\hline ${(128, 64, 35; \mu, d_f\geq 17)}_{7}$ & ${(128, 16, \gamma; 1, d_f\geq 12)}_{7}$\\
\hline
\hline ${(32, 15, 10; \mu, d_f\geq 9)}_{11}$ & ${(32, 8, \gamma; 1, d_f\geq 6)}_{11}$\\
\hline ${(64, 42, 15; 1, d_f\geq 4)}_{11}$ & ${(64, 32, \gamma; 1, d_f\geq 5)}_{11}$\\
\hline ${(64, 22, 15; \mu, d_f\geq 17)}_{11}$ & ${(64, 16, \gamma; 1, d_f\geq 6)}_{11}$\\
\hline ${(128, 64, 35; 1, d_f\geq 8)}_{11}$ & ${(128, 64, \gamma; 1, d_f\geq 5)}_{11}$\\
\hline ${(128, 64, 35; \mu, d_f\geq 17)}_{11}$ & ${(128, 32, \gamma; 1, d_f\geq 6)}_{11}$\\
\hline
\end{tabular}
\end{center}
\end{table}

As can be seen in Table~\ref{table1}, the new codes have
parameters better than the ones shown in \cite{Aly:2007} for
almost all cases. Only in two cases the codes available in
\cite{Aly:2007} are better than the new codes.


\section{Summary}\label{VI}
We have constructed new families of convolutional codes derived
from group character codes. These codes are constructed
algebraically and not by computational search or even case by
case. Moreover, the new code parameters are better than the ones
available in the literature.

\section*{Acknowledgment}
This research has been partially supported by the Brazilian
Agencies CAPES and CNPq.

\end{document}